\def\R{{\mathbf R}}
\def\N{{\mathbf N}}
\def\car#1{{\mathbf 1}}
\def\indic{{\mathbf 1}}
\newcommand{\esp}[1]{{\mathbf E}\left[#1\right]}
\def\P{\mathbf P}
\def\/{\,|\,} 
\def\<{\left\langle}
\def\>{\right\rangle}
\title{Performance analysis of  RIS-assisted communications}
\author{H. Adrat\inst{1} \and L. Decreusefond\inst{2}\orcidID{0000-0002-8964-0957} \and P. Martins\inst{2}\orcidID{0000-0001-9497-4890}}
\institute{University Mohammed VI Polytechnic, Ben Guerir, Morocco \email{hamza.adrat@um6p.ma}\and Telecom
  Paris, Paris, France}
\date{July 2023}
\pgfplotsset{compat=1.18}
\begin{document}
\maketitle{}
\begin{abstract}
	Reconfigurable Intelligent Surfaces (RIS) are currently considered for adoption in future 6G stantards. ETSI and 3GPP have started feasibility and performance investigations of such a technology. This work proposes an analytical model to analyze RIS performance. It relies on a simple street model where obstacles and mobile units are all aligned. RIS is positioned onto a building parallel to the road. The coverage probability in presence of obstacles and concurrent communications is then computed as a performance criteria.
\end{abstract}

\section{Introduction}
In the recent years, there has  been a tremendous amount of  activity in communication
technologies (new waveforms, MIMO signalling, non-orthogonal multiple access and
so on \cite{Basar_2021}) which lead to much improved data rate in wireless 5G/6G systems.
Among the most promising technologies are the reconfigurable intelligent
surfaces (RIS for short). Following \cite{Pan2021} and references therein, an
RIS is a planar surface consisting of an array of passive or active reflecting elements,
each of which can independently  change the phase of the received signal and retransmit
it in  an arbitrary  chosen direction. In other words, radio signals can be tailored to bypass
obstacles between the line of sight between the emitter and the receiver as in Figure ~\ref{fig:1}.

\begin{figure}[!ht]
	\centering
	\includegraphics*[width=0.4\textwidth]{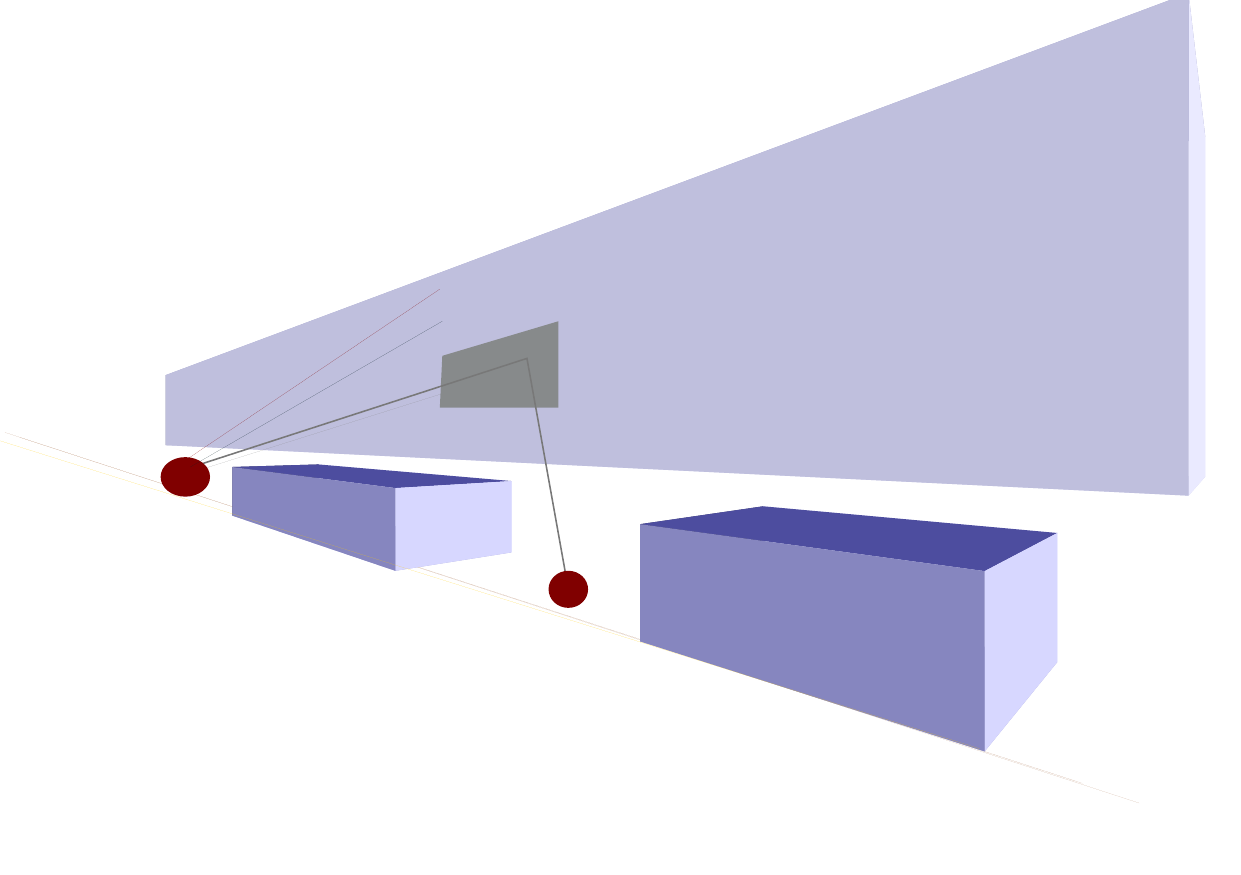}
	\caption{In a RIS-assisted system, radio signals can bypass obstructions.}
	\label{fig:1}
\end{figure}

RIS technology is expected to be applied to wireless systems operating in
frequency bands where the wavelength is of the order of the millimetre. In such
a situation, many objects of daily life are obstacles to the propagation of the
radio waves. On a modeling perspective, this means that we have to consider
models on a meter rather than a hundred meter scale, taking into accounts
buildings, trees, human, cars, etc.

Most of the papers investigating the performance of RIS asssited networks like
\cite{Yang2020,Do2021} are focused on the channel model. At a more macroscopic
level, in \cite{Basar_2021}, for passive RIS,  it is shown that the received power is
proportional to
\begin{equation}
	\label{eq_main:1}
	N^{2}\frac{P_{T}}{d_{\text{SR}}^{2}d_{\text{RD}}^{2}\sigma^{2}}
\end{equation}
where $P_{T}$ is the transmit power, $N$ the number of elements in the RIS,
$\sigma^{2}$ is the receiver noise power and
$d_{\text{SR}}$ and $d_{\text{RD}}$ are the distance between the
source and the RIS, respectively the RIS and the destination, assuming no line-of-sight
between source and destination. The distance between the source and the RIS
significantly affects  the quality of communication, hence the need to find
optimal locations for these devices. Alternatively, we can consider what is the
performance of a system given the location of an RIS which is determined by
practical constraints as trivial as  the possibility of a support or more subtle
legal obligations.

The usual models that come out of stochastic geometry are, in a sense, on a macroscopic scale: In urban areas we work on the scale of a district, in rural areas we work on regions of a few tens of kilometres. The possibility of propagation interruption, multi-paths, etc. are taken care via shadowing and a path-loss exponent larger than $2$. As RIS are supposed to circumvent obstacles, we need to have much finer models at the scale of buildings, cars, or any other  object that can be an obstacle to  wave propagation. This creates a new difficulty in constructing tractable models that include both the position of the RIS and of the obstructions. There are a few papers on  modelling of obstacles. In \cite{Bai2014},
obstacles are represented by rectangles with random centre, length and width. In
\cite{Baccelli2022}, the obstacles are represented as a fractal multiplicative
cascade. In both papers, the
goal is to evaluate the blocking probability in a wireless system due to these
obstacles. The paper that comes  closest to
our consideration is \cite{Sun2023} where a system with multiple base stations
dispatched according to a homogeneous Poisson process is assisted by multiple RISs
is deployed as a  Matern hard core process to take  account of the fact that a RIS
cannot be too close or too far from its serving antenna.  There is no specific
hypothesis regarding the location of obstacles, as the resulting  configuration is assumed
to enable mobile units to avoid any obstacle.

However, none of these papers do model both obstacles and potential support for
a RIS.
This paper addresses this problem in a highly constrained environment.
We consider a road bordered by a building with one RIS on it, serving mobile
units aligned on a line parallel to the wall.
We compute the mean (with respect to the randomness of the environment) number of customers who can be served by a reference user
and the probability that a customer can be served given its position.

The paper is organised as follows. In Section, we compute the mean number of
customers who  can communicate with a typical customer located at the origin
thanks to the RIS. In Section , we take into account the attenuation of the
signal as given in ~\eqref{eq_main:2} to compute the coverage probability at a
given position on the pavement.

\section{Model description}
\label{sec:model}

A natural model should be  three dimensional, but for the sake of simplicity, without loosing
too much information, we restrict our considerations to a planar description.
We work on the infinite line. Obstacles are represented as rectangles of fixed width $d$ and random length.
Between them, there is a portion of free space in which the users may be
located. The lengths of the free space intervals are also random. We denote
by $X(t)$ the random variable which is equal to $1$ if there the point $t$ is
covered by an obstacle and equal to $0$ otherwise. We assume that the process
$X$ is a stationary alternating renewal process and that there is a user at the
origin, i.e. we work given the fact that $X(0)=0$.

Due to the symmetry, we only study the propagation of the signal on the right of
the typical user.

\begin{definition}
	We denote by $(U_{n},\, n\in \N)$, respectively $(W_{n},\, n\in \N)$,
	the successive lengths of time that the system is in state~$0$, respectively
	in state~$1$. According to the assumptions, these random variables are
	independent. The random variables $(U_{n},\, n\ge 2)$ (respectively
	$(W_{n},\, n\in \N)$) are identically
	distributed of cumulative distribution function (cdf) $F_{U}$ and average
	$\gamma_{U}$ (respectively cdf $F_{V}$ and mean $\gamma_{V}$). To ensure
	stationarity, $U_{1}$ is supposed to have probability density function (pdf)  $\gamma_{U}^{-1}(1-F_{U})$.
	We set
	\begin{equation*}
		V_{n}=U_{n}+W_{n},
	\end{equation*}
	the length of the n-th cycle.
\end{definition}
In order for  a customer to be covered by the RIS, it is necessary that at least a
length $\delta$ of the RIS is visible to the user. It is clear that the
rightmost domain which is accessible thanks to the RIS coincides with the
rightmost part of the reconfigurable intelligent surface. We assume that this part is the interval
$[a,a+\delta]$.
The users are assumed to be aligned, at a distance $l$ from the wall on which
lies the RIS.  We denote by
$B_{n}$ (respectively $E_{n}$), the beginning (respectively the end) of the n-th
obstacle. We have
\begin{equation*}
	B_{n}= \sum_{i=1}^{n-1}V_{i}+U_{n} \text{ and }E_{n}=\sum_{i=1}^n V_{i}.
\end{equation*}
The figure~\ref{fig:notations} displays the notations.

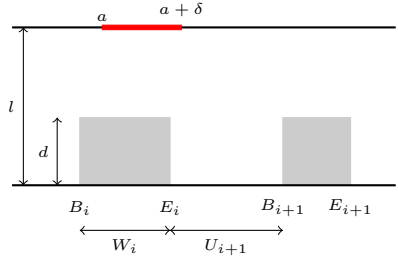
\begin{figure}[!ht]
	\centering
	\begin{tikzpicture}[scale=0.3,font=\fontsize{6}{6}\selectfont]
		\filldraw[color=black!20] (2,0) rectangle (6,3);
		\filldraw[color=black!20] (11,0) rectangle (14,3);
		\draw[thick] (-1,0) -- (16,0);
		\draw[thick] (-1,7) -- (16,7);
		\draw[<->] (-0.5,0) --  node[left] {$l$} (-0.5,7);
		\draw[<->] (1,0) -- node[left] {$d$} (1,3);

		\draw node at (2, -1) {$B_{i}$};
		\draw node at (6, -1) {$E_{i}$};
		\draw node at (11, -1) {$B_{i+1}$};
		\draw node at (14, -1) {$E_{i+1}$};

		\filldraw[red] (3,6.9) node [above,color=black] {$a$} rectangle
		(6.5,7.1) node [above,color=black] {$a+\delta$};
		\draw[<->,thin] (2,-2) -- node[below] {$W_{i}$} (6,-2);
		\draw[<->,thin] (6,-2) -- node[below] {$U_{i+1}$} (11,-2);

	\end{tikzpicture}
	\caption{Notations}
	\label{fig:notations}
\end{figure}

\section{Covered domain}
\label{sec:coverage-domain}
If we assume that the mobile units (or customers) are deployed according to an
homogeneous Poisson process of intensity $\mu$ in the void intervals, the number of customers who
can communicate with the typical user follows a Poisson distribution whose
parameter is $\mu$ times the length of the covered domain. As the positions of
voids and obstacles are random, we compute the mean length of the covered domain with
respect to the law of $X$.
\begin{theorem}
	\label{th:1}
	The mean length of the covered domain $\esp{L}$ is given by:	
		\begin{multline*}
		\esp{L}  = \sum_{i \ge 1} \left[ \int_{a}^{+\infty} \esp{ \left( U_1 - \frac{t-a}{\rho - 1} \right) \indic_{ \left \{ U_1 > \frac{t-a}{\rho - 1} \right \} } } \, f_{i}(t) \, dt \right.                                              \\
		+ \int_0^{a} \esp{ U_1 \indic_{ \left \{ U_1 > a + \delta - t \right \} } } \, f_{i}(t) \, dt                                                                                                                         \\
		+ \left. \int_{0}^a \esp{\left( \frac{\rho}{\rho - 1} U_1 - \frac{a+\delta-t}{\rho - 1} \right) \, \indic_{ \left \{ \frac{a + \delta - t}{\rho} \le U_{1} \le a + \delta - t \right \} }} \, f_{i}(t) \, dt \right],
		\end{multline*}
	where $f_i$ is the probability density function of the random variable $E_i = \displaystyle \sum_{n=1}^i V_n$.
\end{theorem}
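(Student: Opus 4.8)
The plan is to first fix a realisation of the obstacle process and describe the covered set deterministically, and only then to average using the renewal structure. Place the user line at height $0$ and the RIS on the wall at height $l$, and recall that the obstacles are rectangles of height $d$. The elementary geometric fact I would establish first is the ``shadow'' rule: a user sitting at horizontal distance $u$ beyond the right edge of an obstacle (that edge being at abscissa $t$ and height $d$) sees the RIS point of abscissa $s<t$ and height $l$ if and only if the sight line clears the top corner $(t,d)$, which after writing the similar-triangles relation becomes $u\,(\rho-1)\ge t-s$ with $\rho=l/d$. Symmetrically, a user to the left of an obstacle whose left edge is at abscissa $q$ sees a RIS point of abscissa $s>q$ iff the analogous inequality holds. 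Since the user must see the whole visible segment $[a,a+\delta]$ of length $\delta$, and since the threshold is monotone in the abscissa of the target point, the binding constraint is the leftmost point $a$ for a user standing to the right of the RIS, and the rightmost point $a+\delta$ for a user standing to its left.

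With this in hand I would compute, for each obstacle $i$ ending at $E_i=t$, the length of the covered part of the void of length $U$ lying immediately to its right; summing these contributions over $i\ge1$ yields $L$. Three regimes appear, according to the position of $t$ relative to $[a,a+\delta]$ and the length $U$ of the trailing void. If $t>a$, every user of the trailing void lies to the right of $a$ and is limited only by the visibility of $a$ over obstacle $i$; the covered sub-interval is $\{u:\ u\ge (t-a)/(\rho-1)\}$, of length $\bigl(U-(t-a)/(\rho-1)\bigr)^+$, the first integrand. If $t<a$ the whole RIS lies to the right of the void, and the binding obstacle for a user to the left of the RIS is instead the \emph{next} obstacle, whose left edge sits at distance $U$ from $t$; unwinding the shadow rule gives a covered prefix of length $\tfrac{\rho}{\rho-1}U-\tfrac{a+\delta-t}{\rho-1}$. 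Comparing this quantity with $0$ and with $U$ produces exactly the two indicators $\{U>a+\delta-t\}$ (void so long that the next obstacle clears the RIS, whole void covered, length $U$) and $\{(a+\delta-t)/\rho\le U\le a+\delta-t\}$ (only the prefix covered), i.e. the second and third integrands, the remaining short-void case $\{U<(a+\delta-t)/\rho\}$ contributing nothing.

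Finally I would pass to the mean. Writing $L=\sum_{i\ge1}L_i$ with $L_i$ the covered length of the void following the $i$-th obstacle, I would condition on $E_i=t$: because that void length is independent of $E_i$ (it involves a fresh inter-arrival not entering $E_i=\sum_{n\le i}V_n$), the conditional law of $U$ is the generic void law and $\esp{L_i}=\int \esp{g(U,t)}\,f_i(t)\,dt$, where $g(\cdot,t)$ is the per-void covered length found above and $f_i$ is the density of $E_i$. Summing over $i$ is legitimate by monotone convergence since every $L_i\ge0$, and collecting the three pieces of $g$ gives the stated triple sum. I expect the genuine difficulty to sit entirely in the second step: checking that the three regimes tile the covered length of a void with no gap and no overlap, and in particular treating the boundary case of an obstacle terminating inside the span $(a,a+\delta)$, where visibility of $a$ (over the obstacle on the left) and of $a+\delta$ (over the obstacle on the right) must be reconciled. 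The averaging and the interchange of sum and expectation are, by contrast, routine once the deterministic geometry is pinned down.
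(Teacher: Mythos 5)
Your proposal follows essentially the same route as the paper: you decompose $L$ void by void, and your three regimes (obstacle ending at $t>a$; $t<a$ with $U>a+\delta-t$; $t<a$ with $(a+\delta-t)/\rho\le U\le a+\delta-t$) are exactly the paper's Scenarios 1, 2 and 3, derived from the same similar-triangles shadow rule $u(\rho-1)\ge t-s$ and averaged in the same way via the independence of the trailing void length from $E_i$ and the density $f_i$. The per-void covered lengths you obtain match the paper's integrands term for term, so the argument is correct and not a genuinely different proof.
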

To prove this theorem, we must discuss according to the position of the two ends of the RIS with respect to the sequence of obstacles. We have three situations. The most frequent case, illustrated in Figure~\ref{fig:scenario_1}, is the situation  where the leftmost part of the RIS (located at abscissa $a$) is on the left of the end of an obstacle. The next case occurs only once and is obtained when $[a,a+\delta]$ lies in between the end of an obstacle and the beginning of the next. The complementary scenario which appears a finite number of times is described in Figure~\ref{fig:scenario_3}.

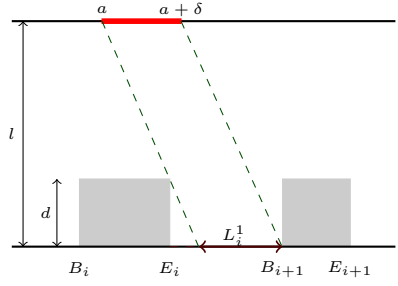
\begin{figure}[!ht]
	\centering
	\begin{tikzpicture}[scale=0.3,font=\fontsize{6}{6}\selectfont]
		\filldraw[black!20] (2,0) rectangle (6,3);
		\filldraw[black!20] (11,0) rectangle (14,3);

		\draw[thick] (-1,0) -- (16,0);
		\draw[thick] (-1,10) -- (16,10);
		\draw[<->] (-0.5,0) -- (-0.5,10);
		\draw node at (-1, 5) {$l$};
		\draw[<->] (1,0) -- (1,3);
		\draw node at (0.5, 1.5) {$d$};

		\draw node at (2, -1) {$B_{i}$};
		\draw node at (6, -1) {$E_{i}$};
		\draw node at (11, -1) {$B_{i+1}$};
		\draw node at (14, -1) {$E_{i+1}$};

		\filldraw[red] (3,9.9) rectangle (6.5,10.1);
		\draw node at (3, 10.5) {$a$};
		\draw node at (6.5, 10.5) {$a+\delta$};

		\draw[dashed, green!30!black] (6.5,10) -- (11,-0.05);
		\draw[dashed, green!30!black] (3,10) -- (7.3,-0.05);

		\draw[dashed, red!30!black] (6,0) -- (7.3,0);
		\draw[<->, thick, red!30!black] (7.3, 0) -- (11, 0);
		\draw node at (8.85, 0.5) {$L^1_i$};
	\end{tikzpicture}
	\caption{Covered domain between $i^{th}$ and $(i+1)^{th}$ obstacles - Scenario 1}
	\label{fig:scenario_1}
\end{figure}

\begin{figure}[!ht]
	\centering
	\begin{tikzpicture}[scale=0.3,font=\fontsize{6}{6}\selectfont]
		\filldraw[black!20] (2,0) rectangle (6,3);
		\filldraw[black!20] (11,0) rectangle (14,3);

		\draw[thick] (-1,0) -- (16,0);
		\draw[thick] (-1,10) -- (16,10);
		\draw[<->] (-0.5,0) -- (-0.5,10);
		\draw node at (-1, 5) {$l$};
		\draw[<->] (1,0) -- (1,3);
		\draw node at (0.5, 1.5) {$d$};

		\draw node at (2, -1) {$B_{i}$};
		\draw node at (6, -1) {$E_{i}$};
		\draw node at (11, -1) {$B_{i+1}$};
		\draw node at (14, -1) {$E_{i+1}$};

		\filldraw[red] (9.5,9.9) rectangle (13,10.1);
		\draw node at (9.5, 10.5) {$a$};
		\draw node at (13, 10.5) {$a+\delta$};

		\draw[dashed, green!30!black] (9.5,10) -- (6,-0.05);
		\draw[dashed, green!30!black] (13,10) -- (10.2,-0.05);

		\draw[<->, thick, red!30!black] (6, 0) -- (10.2, 0);
		\draw node at (8.1, 0.5) {$L^3_i$};
	\end{tikzpicture}
	\caption{Covered domain between $i^{th}$ and $(i+1)^{th}$ obstacles - Scenario 3}
	\label{fig:scenario_3}
\end{figure}
\begin{lemma}[Scenario 1]
	\label{lem:scenario_1}
	For $i \ge 1$, if $a \le E_i$, then the mean length of the covered domain $\esp{L^1_i}$ between $i^{th}$ and $(i+1)^{th}$ obstacles is :
	\begin{equation*}
		\esp{L^1_i} = \int_{a}^{+\infty} \esp{ \left( U_1 - \frac{t-a}{\rho - 1} \right) \indic_{ \left \{ U_1 > \frac{t-a}{\rho - 1} \right \} } } \, f_{i}(t) \, dt.
	\end{equation*}
\end{lemma}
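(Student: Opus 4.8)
The plan is to turn the visibility condition into an explicit threshold on the abscissa $x$ of a candidate served user lying in the void $(E_i,B_{i+1})$, and then integrate over the law of $E_i$. First I would freeze a realization of the obstacle process and place a test user at $(x,0)$ with $x\in(E_i,B_{i+1})$, so that the user sits to the right of the $i$-th obstacle $[B_i,E_i]\times[0,d]$ and to the left of the $(i+1)$-th one. Since we are in scenario~1, $a\le E_i$, the leftmost point $(a,l)$ of the useful segment lies horizontally behind the $i$-th obstacle. Looking up and to the left from $(x,0)$ toward $[a,a+\delta]\times\{l\}$, the only possible blocker is that obstacle, and the hardest target to reach is the leftmost one: for a target of abscissa $s\le E_i$ the sight ray, evaluated over the horizontal span of the obstacle, is lowest at the top-right corner $(E_i,d)$, and its height there, $l(x-E_i)/(x-s)$, decreases as $s$ decreases. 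Hence the user sees the whole segment \emph{iff} it sees $(a,l)$ over the corner $(E_i,d)$; targets with $s>E_i$ are never occluded by this obstacle, and the $(i+1)$-th obstacle lies on the far side of the user and cannot interfere.

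The key computation is the clearance condition itself. Writing the ray from $(x,0)$ to $(a,l)$ and requiring its height at abscissa $E_i$ to be at least $d$ gives $l(x-E_i)/(x-a)\ge d$, i.e.\ with $\rho=l/d$,
\[
x\ \ge\ \frac{\rho E_i-a}{\rho-1}\ =:\ x_{\min}.
\]
A one-line check gives $x_{\min}-E_i=(E_i-a)/(\rho-1)\ge 0$, so the covered positions form the sub-interval $[x_{\min},B_{i+1}]\subseteq(E_i,B_{i+1})$, empty when $x_{\min}>B_{i+1}$. Using $B_{i+1}=E_i+U_{i+1}$ this collapses to the clean expression
\[
L^1_i=(B_{i+1}-x_{\min})^+=\Bigl(U_{i+1}-\tfrac{E_i-a}{\rho-1}\Bigr)^{+}.
\]

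Finally I would take expectations by conditioning on $E_i=t$. Because the forthcoming void $U_{i+1}$ is independent of $E_i$ and has the common free-space law (denoted $U_1$ in the statement), $\esp{L^1_i\mid E_i=t}=\esp{(U_1-\tfrac{t-a}{\rho-1})^{+}}$; integrating against the density $f_i$ of $E_i$ yields the claimed formula, the lower limit $a$ arising from the scenario~1 constraint $t=E_i\ge a$ and the positive part rewritten as the product with $\indic_{\{U_1>(t-a)/(\rho-1)\}}$. I expect the geometric reduction of the first paragraph to be the main obstacle: the work is in justifying that $(a,l)$ is the binding target and $(E_i,d)$ the binding corner, and that no further obstacle can shadow the segment; once this monotonicity is in hand, the passage to $\esp{L^1_i}$ is a routine conditioning argument.
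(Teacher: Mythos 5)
Your proof is correct and follows essentially the same route as the paper: both reduce to the pathwise identity $L^1_i=\bigl(U_{i+1}-\tfrac{E_i-a}{\rho-1}\bigr)^{+}\indic_{\{E_i\ge a\}}$ and then integrate against the density $f_i$ of $E_i$, using the independence of $U_{i+1}$ from $E_i$. Your derivation of the threshold $x_{\min}$ merely spells out the ``elementary geometry'' step that the paper delegates to Figure~\ref{fig:scenario_1}.
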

Note that since $E_k>E_i$ for $k>i$, the condition $a\le E_k$ is satisfied eternally from index $i$ onwards and thus Scenario~1 is the most common.
\begin{proof}[Proof of Lemma~\protect\ref{lem:scenario_1}]
	According to Figure~\ref{fig:scenario_1} and elementary geometry, we derive  the expression of $L^1_i$:
	\begin{equation*}
		\begin{split}
			L^1_i & = \left( \left( B_{i+1} - E_i \right) - \dfrac{d}{l-d} \left( E_i - a \right) \right)^+  \indic_{ \{ E_i \ge a \} } \\
			      & = \left( (B_{i+1} - E_i) - \dfrac{1}{\rho - 1}(E_i - a) \right)^+ \indic_{ \{ E_i \ge a \} }                        \\
			      & = \left( U_{i+1} - \dfrac{1}{\rho - 1}(E_i - a) \right) \indic_{ \{ a \le E_i \le a + (\rho - 1) U_{i+1} \} }.
		\end{split}
	\end{equation*}
	Thus,  the mean length of the covered domain between $i^{th}$ and $(i+1)^{th}$ obstacles in this scenario is given by:
	\begin{align*}
\esp{L^1_i}
			 & = \esp{\left( U_{i+1} - \dfrac{1}{\rho - 1}(E_i - a) \right) \indic_{ \{ a \le E_i \le a + (\rho - 1) U_{i+1} \} }}                                                 \\
			            & = \int_{\R} \esp{ \left( U_1 - \frac{t-a}{\rho - 1} \right) \indic_{ \left \{ U_1 > \frac{t-a}{\rho - 1} \right \} } } \, f_{i}(t) \indic_{ \{ t > a \} } \, dt \\
			            & = \int_{a}^{+ \infty} \esp{ \left( U_1 - \frac{t-a}{\rho - 1} \right) \indic_{ \left \{ U_1 > \frac{t-a}{\rho - 1} \right \} } } \, f_{i}(t) \, dt.
\end{align*}
	The proof is thus complete.
\end{proof}
The proofs for Scenario 2 and 3 are similar. We obtain the following lemma.
\begin{lemma}[Scenario 2]
	\label{lem:scenario_2}
	For $i \ge 1$, if $a \ge E_i$ and $a+\delta \le B_{i+1}$, then the mean length of the covered domain $\esp{L^2_i}$ between $i^{th}$ and $(i+1)^{th}$ obstacles is :
	\begin{equation*}
		\esp{L^2_i} = \int_0^{a} \esp{ U_1 \indic_{ \left \{ U_1 > a + \delta - t \right \} } } \, f_{i}(t) \, dt
	\end{equation*}
\end{lemma}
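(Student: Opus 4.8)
The plan is to follow the pattern of Lemma~\ref{lem:scenario_1}: first pin down the covered length $L^2_i$ by elementary geometry under the two scenario hypotheses, then average it against the density of $E_i$. The geometric core is the claim that, when $a\ge E_i$ and $a+\delta\le B_{i+1}$, the \emph{entire} void $(E_i,B_{i+1})$ is covered. To see it, fix a user at $(t,0)$ with $t\in(E_i,B_{i+1})$ and a RIS element at $(s,l)$ with $s\in[a,a+\delta]$, and set $\rho=l/d$. The segment joining them crosses the height $y=d$ of the obstacle tops at the abscissa $x_d=\rho^{-1}\bigl(s+(\rho-1)t\bigr)$. From $s\ge a\ge E_i$ and $t>E_i$ one gets $x_d>E_i$, while $s\le a+\delta\le B_{i+1}$ and $t<B_{i+1}$ give $x_d<B_{i+1}$. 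The part of the segment below height $d$ runs horizontally between $x_d$ and $t$, both lying in the open void $(E_i,B_{i+1})$, so it stays strictly inside the void and meets no obstacle; the part above height $d$ lies above every obstacle. Hence each user of the interval sees all of $[a,a+\delta]$, the covered length equals the full gap $B_{i+1}-E_i=U_{i+1}$, and, incorporating the two hypotheses as indicators, $L^2_i=U_{i+1}\,\indic_{\{E_i\le a\}}\,\indic_{\{U_{i+1}\ge a+\delta-E_i\}}$.

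It then remains to take the expectation, exactly as in Lemma~\ref{lem:scenario_1}. Since $E_i=\sum_{j\le i}V_j$ is a function of the first $i$ cycles whereas $U_{i+1}$ is the following free period, the renewal hypothesis makes $U_{i+1}$ independent of $E_i$. Conditioning on $E_i=t$, which has density $f_i$, and using $E_i\ge0$ together with the constraint $E_i\le a$ to restrict the range to $[0,a]$, I would obtain
\[
\esp{L^2_i}=\int_0^{a}\esp{U_1\,\indic_{\{U_1>a+\delta-t\}}}\,f_i(t)\,dt,
\]
where $U_1$ denotes the generic free-interval length, as in the previous lemma.

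The only genuinely delicate point, and so the step I expect to be the main obstacle, is the geometric claim that in this configuration neither neighbouring obstacle hides any part of the RIS; once full coverage is secured the rest is the same conditioning-and-independence computation as in Scenario~1. I would also dispose of the boundary configurations in which $x_d$ lands exactly on $E_i$ or $B_{i+1}$ (in particular $a=E_i$ or $a+\delta=B_{i+1}$), but these form a null set for the continuous density $f_i$ and leave the expectation unchanged.
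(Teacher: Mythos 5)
Your proof is correct and takes exactly the route the paper intends: the paper omits this proof, saying only that it is ``similar'' to Scenario~1, and your argument is precisely that analogue --- establish $L^2_i = U_{i+1}\,\indic_{\{E_i\le a\}}\,\indic_{\{U_{i+1}\ge a+\delta-E_i\}}$ by elementary geometry, then integrate against $f_i$ using the independence of $U_{i+1}$ from $E_i$ (with the paper's own convention of writing $U_1$ for the generic void length). Your geometric justification that the whole void is covered (since $x_d$ is a convex combination of $t$ and $s$, both lying in $[E_i,B_{i+1}]$) correctly fills in the detail the paper leaves implicit.
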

\begin{lemma}[Scenario 3]
	\label{lem:scenario_3}
	For $i \ge 1$, if $a \ge E_i$ and $a+\delta \ge B_{i+1}$, then the mean length of the covered domain $\esp{L^3_i}$ between $i^{th}$ and $(i+1)^{th}$ obstacles is :
		\begin{multline*}
			\esp{L^3_i} =\\\int_{0}^a \mathbf E\left[\left( \frac{\rho U_1-(a+\delta-t)}{\rho - 1}  \right)
			\, \indic_{ \left \{ \frac{a + \delta - t}{\rho} \le U_{1} \le a + \delta - t \right \} }\right] \, f_{i}(t) \, dt.
		\end{multline*}

\end{lemma}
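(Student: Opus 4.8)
The plan is to follow the geometric reduction of Lemma~\ref{lem:scenario_1}, adapted to the configuration of Figure~\ref{fig:scenario_3}, and then to average in the same way. First I would record the two features that make Scenario~3 different. Since $a\ge E_i$, every line of sight joining a user of the free interval $[E_i,B_{i+1}]$ to a point of the segment $[a,a+\delta]$ has both endpoints with abscissa at least $E_i$; as the $i$-th obstacle occupies $\{x\le E_i\}$ below height $d$, it never blocks the signal, so the covered interval starts exactly at $E_i$. The only obstruction is the $(i+1)$-th obstacle, whose top-left corner is $(B_{i+1},d)$. Because the whole segment $[a,a+\delta]$ must be visible, the relevant obstruction comes from the rightmost RIS point: among the points $s\in(B_{i+1},a+\delta]$ that the obstacle can hide, $s=a+\delta$ imposes the most restrictive bound on the user, and the ray from $(a+\delta,l)$ grazing $(B_{i+1},d)$ fixes the right end $x^\star$ of the covered interval.

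Then I would compute $x^\star$ by the similar-triangles argument already used in Lemma~\ref{lem:scenario_1}. With $\rho=l/d$, collinearity of $(a+\delta,l)$, $(B_{i+1},d)$ and $(x^\star,0)$ gives $l\,(B_{i+1}-x^\star)=d\,((a+\delta)-x^\star)$, hence
\[
  x^\star=\frac{\rho\,B_{i+1}-(a+\delta)}{\rho-1}.
\]
Subtracting $E_i$ and using $B_{i+1}-E_i=U_{i+1}$ yields the covered length $x^\star-E_i=\bigl(\rho\,U_{i+1}-(a+\delta-E_i)\bigr)/(\rho-1)$. This is the true length only when it is nonnegative and when Scenario~3 holds, namely $a\ge E_i$ and $a+\delta\ge B_{i+1}$. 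The condition $a\ge E_i$ reads $E_i\le a$; the condition $a+\delta\ge B_{i+1}$ reads $U_{i+1}\le a+\delta-E_i$; and nonnegativity of the length reads $U_{i+1}\ge (a+\delta-E_i)/\rho$. Therefore
\[
  L^3_i=\frac{\rho\,U_{i+1}-(a+\delta-E_i)}{\rho-1}\,
  \indic_{\left\{\frac{a+\delta-E_i}{\rho}\le U_{i+1}\le a+\delta-E_i\right\}}\,
  \indic_{\{E_i\le a\}}.
\]

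Finally I would take expectations as in Lemma~\ref{lem:scenario_1}: conditioning on $E_i=t$, whose density is $f_i$, and using that the free-space length $U_{i+1}$ is independent of $E_i=\sum_{n=1}^i V_n$ and may be written $U_1$ for a generic copy as in Lemma~\ref{lem:scenario_1}, the factor $\indic_{\{E_i\le a\}}$ together with $E_i\ge 0$ confines the outer integral to $t\in[0,a]$ and produces
\[
  \esp{L^3_i}=\int_0^a \esp{\left(\frac{\rho U_1-(a+\delta-t)}{\rho-1}\right)
  \indic_{\left\{\frac{a+\delta-t}{\rho}\le U_1\le a+\delta-t\right\}}}\,f_i(t)\,dt,
\]
which is the claimed identity. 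The averaging is routine; the real work is the bookkeeping of the second paragraph, i.e.\ recognising that full visibility forces the worst-case (rightmost) RIS point against the left corner of the next obstacle, and checking that the lower bound $U_1\ge(a+\delta-t)/\rho$ in the indicator is precisely the positivity of the covered length, so that no separate positive part is needed.
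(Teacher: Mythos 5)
Your proof is correct and follows exactly the template the paper intends: the paper omits the argument for Scenario 3, stating only that it is "similar" to that of Lemma~\ref{lem:scenario_1}, and your derivation is precisely that analogue — identify the covered interval as $[E_i,x^\star]$ with $x^\star$ given by the ray from $(a+\delta,l)$ grazing the corner $(B_{i+1},d)$, encode the scenario conditions and positivity in the indicator, and integrate against $f_i$ using the independence of $U_{i+1}$ and $E_i$. The resulting pointwise expression and the final integral match the stated formula (including the same mild identification of $U_{i+1}$ with a generic copy $U_1$ that the paper itself makes), so there is nothing to add.
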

To gain more insights of the previous formula, we instantiate it for the specific case where holes and obstacles are exponentially distributed: We assume that $U_1$ follows an exponential distribution with parameter $\gamma_1$, and $W_1$ follows an exponential distribution with parameter $\gamma_2$.
In this case, the random
variable $$E_i = \displaystyle \sum_{n=1}^i V_n =  \displaystyle \sum_{n=1}^i U_n + \displaystyle \sum_{n=1}^i W_n$$
is expressed as the sum of two independent gamma-distributed random variables
with parameters $(i, \gamma_1)$ and $(i, \gamma_2)$ respectively. We then  need to
introduce the notion of Kummer's confluent hypergeometric function.
\begin{definition}
	Let $M(a,b,z)$ be the Kummer's (confluent hypergeometric) function.
	If $\Re(b) > \Re(a) > 0, \; M(a,b,z)$ can be represented as an integral:
	\begin{equation}\label{eq:kummerfunction}
		M(a,b,z) = \frac{\Gamma(b)}{\Gamma(a) \Gamma(b-a)} \int_0^1 e^{zt} \, t^{a-1} \, (1-t)^{b-a-1} \, dt \, ,
	\end{equation}
	where $\Gamma$ is the usual Gamma function.
\end{definition}
With these notations at hand, we have:
\begin{equation*}
	f_{i}(t) = \frac{(\gamma_1 \gamma_2)^i}{(2i-1)!} \, t^{2i-1} \, e^{- \gamma_2 t} \, M(i, 2i, (\gamma_2 - \gamma_1)t).
\end{equation*}
\begin{theorem}
	\label{th:2}
	Using the previous notations, the mean length of the covered domain $\esp{L}$ in this case is given by:
	\begin{align*}
\esp{L}
			&=\frac{1}{\gamma_1} e^{\frac{\gamma_1 a}{\rho - 1}} \frac{1}{1-r} \\
& - \frac{1}{\gamma_1} \sum_{i \ge 1} \frac{(\gamma_1 \gamma_2)^i}{(2i-1)!} \left[ \int_0^{a} e^{-\gamma_1\frac{t-a}{\rho-1}} \, t^{2i-1} \, e^{-\gamma_2 t} \, M(i, 2i, (\gamma_2 - \gamma_1)t) \, dt \right.\\
&+ \frac{e^{-\gamma_1(a+\delta)}}{\rho-1} \int_0^a e^{-(\gamma_2 - \gamma_1)t} \, t^{2i-1} \, M(i, 2i, (\gamma_2 - \gamma_1)t) \, dt\\
&- \left. \frac{\rho \, e^{-\frac{\gamma_1}{\rho}(a+\delta)}}{\rho-1} \int_0^a e^{-(\gamma_2 - \frac{\gamma_1}{\rho})t} \, t^{2i-1} \, M(i, 2i, (\gamma_2 - \gamma_1)t) \, dt \right],
\end{align*}
	where $r = \dfrac{1}{\left( 1 + \frac{1}{\rho - 1} \right) \left( 1 + \frac{\gamma_1}{\gamma_2(\rho - 1)} \right)} < 1$.
\end{theorem}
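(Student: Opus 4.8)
The plan is to take the general expression of Theorem~\ref{th:1} as the starting point and perform the two substitutions prescribed by the hypotheses: evaluate the three inner expectations when $U_1$ is $\mathrm{Exp}(\gamma_1)$, and insert the explicit density $f_i$. A preliminary remark makes the whole computation legitimate: since the exponential law is memoryless, the stationary residual of the first void, prescribed to have density $\gamma_U^{-1}(1-F_U)$, is again $\mathrm{Exp}(\gamma_1)$; hence every $U_1$ occurring in Lemmas~\ref{lem:scenario_1}--\ref{lem:scenario_3} is a genuine $\mathrm{Exp}(\gamma_1)$ variable and each inner expectation is an elementary truncated moment of an exponential.

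First I would evaluate the three expectations. With $c=(t-a)/(\rho-1)$ one gets $\esp{(U_1-c)\indic_{\{U_1>c\}}}=\gamma_1^{-1}e^{-\gamma_1 c}$ for Scenario~1; with $b=a+\delta-t$ one gets $\esp{U_1\indic_{\{U_1>b\}}}=(b+\gamma_1^{-1})e^{-\gamma_1 b}$ for Scenario~2; and an integration by parts over the window $b/\rho\le U_1\le b$ gives, for Scenario~3, a combination of $e^{-\gamma_1 b/\rho}$ and $e^{-\gamma_1 b}$. I expect the Scenario~2 and Scenario~3 integrands to combine neatly: the terms linear in $b$ and the $\gamma_1^{-1}$ contributions carried by $e^{-\gamma_1 b}$ cancel, leaving only $\frac{\rho}{\gamma_1(\rho-1)}e^{-\gamma_1 b/\rho}-\frac{1}{\gamma_1(\rho-1)}e^{-\gamma_1 b}$, which is the seed of the two integrals appearing in the statement.

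The decisive step is the Scenario~1 term. Writing $\int_a^{+\infty}=\int_0^{+\infty}-\int_0^a$ and factoring out $e^{\gamma_1 a/(\rho-1)}$, the full-line integral is exactly the Laplace transform $\esp{e^{-\frac{\gamma_1}{\rho-1}E_i}}$ evaluated at $s=\gamma_1/(\rho-1)$. Since $E_i$ is the sum of independent $\mathrm{Gamma}(i,\gamma_1)$ and $\mathrm{Gamma}(i,\gamma_2)$ variables, this transform factorizes as $\bigl(\tfrac{\gamma_1}{\gamma_1+s}\bigr)^i\bigl(\tfrac{\gamma_2}{\gamma_2+s}\bigr)^i$, which at $s=\gamma_1/(\rho-1)$ equals precisely the $r^i$ of the statement. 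Summing this geometric contribution over all voids, the void adjacent to the typical user supplying the $i=0$ term $r^0=1$, produces the closed factor $\frac1{\gamma_1}e^{\gamma_1 a/(\rho-1)}\tfrac{1}{1-r}$. The remaining $\int_0^a$ pieces --- the correction to Scenario~1 together with the combined Scenario~2/3 integrands --- then become the bracketed sum after substituting $f_i(t)=\frac{(\gamma_1\gamma_2)^i}{(2i-1)!}t^{2i-1}e^{-\gamma_2 t}M(i,2i,(\gamma_2-\gamma_1)t)$: the weights $e^{\gamma_1 t}$ and $e^{\gamma_1 t/\rho}$ produced by $e^{-\gamma_1 b}$ and $e^{-\gamma_1 b/\rho}$ merge with $e^{-\gamma_2 t}$ to give exactly the exponents $-(\gamma_2-\gamma_1)t$ and $-(\gamma_2-\gamma_1/\rho)t$, while Scenario~1 supplies the factor $e^{-\gamma_1(t-a)/(\rho-1)}$.

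The hardest part is this Scenario~1 step --- spotting the Laplace transform, evaluating it through the Gamma factorization, and assembling the geometric series so that the prefactor is $\tfrac1{1-r}$ rather than $\tfrac r{1-r}$, which is what forces one to account for the covered region next to the origin. The Scenario~2/3 algebra is the other delicate point, but it is a bounded elementary calculation; crucially, the Kummer integrals are left unevaluated, as they survive verbatim in the final formula, so no special-function identity is needed to close the argument.
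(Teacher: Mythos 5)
The paper states Theorem~\ref{th:2} without any proof, so there is nothing to compare your argument against line by line; judged on its own, your route --- specialize Theorem~\ref{th:1} to the exponential case, compute the three truncated exponential moments, split $\int_a^{+\infty}=\int_0^{+\infty}-\int_0^a$, and recognize $\int_0^{+\infty}e^{-\frac{\gamma_1}{\rho-1}t}f_i(t)\,dt=\esp{e^{-\frac{\gamma_1}{\rho-1}E_i}}=r^i$ via the Gamma--Gamma factorization --- is clearly the intended one, and the computations you describe all check out. In particular $\esp{(U_1-c)\indic_{\{U_1>c\}}}=\gamma_1^{-1}e^{-\gamma_1 c}$, the Scenario~2 and~3 integrands do collapse to $\frac{\rho}{\gamma_1(\rho-1)}e^{-\gamma_1 b/\rho}-\frac{1}{\gamma_1(\rho-1)}e^{-\gamma_1 b}$ with $b=a+\delta-t$, and after inserting $f_i$ these reproduce exactly the three bracketed integrals of the statement, including the exponents $-(\gamma_2-\gamma_1)t$ and $-(\gamma_2-\gamma_1/\rho)t$.

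The one genuine gap is the step you yourself single out as the hardest: the passage from $\sum_{i\ge 1}r^i=\frac{r}{1-r}$ to the prefactor $\frac{1}{1-r}$. Theorem~\ref{th:1} sums only over $i\ge 1$, i.e.\ over the voids $(E_i,B_{i+1})$, and says nothing about the void $[0,B_1)$ adjacent to the origin; you assert that this void contributes the missing $i=0$ term $\gamma_1^{-1}e^{\gamma_1 a/(\rho-1)}$, but you never compute it. The assertion is exactly right only when $a\le 0$, because Scenario~1 applied to a fictitious ``zeroth obstacle'' $E_0=0$ requires $a\le E_0$; for $a>0$ the first void falls under Scenarios~2/3 with $t=0$ and contributes $\frac{\rho}{\gamma_1(\rho-1)}e^{-\gamma_1(a+\delta)/\rho}-\frac{1}{\gamma_1(\rho-1)}e^{-\gamma_1(a+\delta)}$ instead, which is not $\gamma_1^{-1}e^{\gamma_1 a/(\rho-1)}$. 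So either Theorem~\ref{th:2} is implicitly restricted to $a\le 0$ (consistent with the numerical section, which takes $a=0$), or its leading factor should read $\frac{r}{1-r}$ plus a separate first-void term; your proof as written does not resolve which, and a complete argument must make the $i=0$ contribution explicit rather than folding it into the geometric series by analogy.
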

Under  the condition $\gamma_1 a \ll \rho - 1$ which means that scenarios $2$ and $3$ are hardly achievable
(especially when the value of $a$ is small), we can neglect all terms associated with $\int_0^a f_i(t) \, dt$.
\begin{corollary}
	\label{cor:approximation}
	If $\gamma_1 a \ll \rho - 1$, then
	\begin{equation*}
		\esp{L} \simeq \frac{1}{\gamma_1}  \frac{1}{1-r}\cdotp
	\end{equation*}
\end{corollary}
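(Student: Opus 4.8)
The plan is to read the result straight off the exact identity of Theorem~\ref{th:2} and to show that, in the regime $\gamma_1 a \ll \rho-1$, the only surviving contribution is its first summand. I would organise the argument into two moves: a trivial expansion of the leading prefactor, and a quantitative bound showing that the entire bracketed sum of three integrals is negligible.

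First I would treat the leading term. Since $\gamma_1 a/(\rho-1)\ll 1$, a first-order expansion gives $e^{\gamma_1 a/(\rho-1)} = 1 + O\!\bigl(\gamma_1 a/(\rho-1)\bigr)$, so that $\tfrac{1}{\gamma_1}e^{\gamma_1 a/(\rho-1)}\tfrac{1}{1-r} = \tfrac{1}{\gamma_1}\tfrac{1}{1-r}\bigl(1+O(\gamma_1 a/(\rho-1))\bigr)$, whose leading part is exactly the asserted right-hand side.

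Next I would show that the three remaining integrals are of smaller order. The key observation is that, once the factor $\tfrac{(\gamma_1\gamma_2)^i}{(2i-1)!}$ is reinstated inside each integral, the integrand becomes $f_i(t)$ times a bounded weight. Indeed, recalling $f_i(t)=\tfrac{(\gamma_1\gamma_2)^i}{(2i-1)!}t^{2i-1}e^{-\gamma_2 t}M(i,2i,(\gamma_2-\gamma_1)t)$, the three integrands equal $f_i(t)$ multiplied respectively by $e^{-\gamma_1(t-a)/(\rho-1)}$, by $\tfrac{e^{-\gamma_1(a+\delta)}}{\rho-1}\,e^{\gamma_1 t}$, and by $\tfrac{\rho\,e^{-\gamma_1(a+\delta)/\rho}}{\rho-1}\,e^{\gamma_1 t/\rho}$. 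On the compact interval $t\in[0,a]$ each of these weights is bounded by a constant depending only on $\gamma_1,\gamma_2,\rho,\delta$. Since every summand is non-negative, Tonelli's theorem lets me exchange $\sum_{i\ge1}$ with $\int_0^a$ and replace $\sum_{i\ge1}f_i$ by the renewal density $h(t):=\sum_{i\ge1}f_i(t)$, so that the whole bracketed correction is bounded by a constant multiple of $\tfrac{1}{\gamma_1}\int_0^a h(t)\,dt$.

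It then suffices to bound $\int_0^a h$. Because each $f_i$ carries the factor $t^{2i-1}$ one has $f_i(0)=0$, and in fact $h(t)=O(t)$ as $t\to0$ (for instance $f_1(t)=\tfrac{\gamma_1\gamma_2}{\gamma_2-\gamma_1}(e^{-\gamma_1 t}-e^{-\gamma_2 t})\sim\gamma_1\gamma_2 t$), while $h$ remains bounded on compact sets by the renewal theorem; hence $\int_0^a h(t)\,dt=O(a^2)$ and the whole bracketed correction is $O(a^2)$. Comparing with the retained leading term $\tfrac{1}{\gamma_1(1-r)}=\Theta(1/\gamma_1)$, valid because $r$ stays bounded away from $1$ for fixed parameters, the relative error splits into a linear part $O(\gamma_1 a/(\rho-1))$ coming from $e^{\gamma_1 a/(\rho-1)}-1$ and a quadratic part $O(\gamma_1\gamma_2 a^2)$ coming from the dropped integrals. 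The former is exactly the quantity that the hypothesis $\gamma_1 a\ll\rho-1$ forces to be small, and it dominates the latter; this is why the condition is stated in the form $\gamma_1 a\ll\rho-1$. The one non-elementary ingredient, and the only real obstacle, is the local boundedness and vanishing-at-the-origin of the renewal density $h$; granting that, the corollary is a two-line consequence of Theorem~\ref{th:2}.
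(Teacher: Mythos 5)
Your overall strategy --- Taylor-expand the prefactor $e^{\gamma_1 a/(\rho-1)}$ and discard the bracketed sum of three integrals --- is exactly what the paper does; in fact the paper offers nothing beyond the one-sentence heuristic preceding the corollary (``we can neglect all terms associated with $\int_0^a f_i(t)\,dt$''), so your reduction of the bracket to $C\,\gamma_1^{-1}\int_0^a h(t)\,dt$ with $h=\sum_i f_i$ the renewal density, via Tonelli and the correct identification of the three bounded weights, is already more than the authors provide and is sound as far as it goes.

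The gap is in your final error accounting. The hypothesis $\gamma_1 a\ll\rho-1$ can be realized with $a$ of order $1$ (or larger) and $\rho$ large, and in that regime two of your claims fail. First, $\int_0^a h(t)\,dt=O(a^2)$ only holds for $a$ small compared with the mean cycle length $\gamma_1^{-1}+\gamma_2^{-1}$; since $h(t)\to\gamma_1\gamma_2/(\gamma_1+\gamma_2)$ by the renewal theorem, the bound valid for all $a$ is $\int_0^a h(t)\,dt\le C\gamma_1\gamma_2 a/(\gamma_1+\gamma_2)$, i.e.\ linear in $a$. Second, and more importantly, the claim that the relative error $O(\gamma_1\gamma_2 a^2)$ of the dropped integrals is dominated by the $O(\gamma_1 a/(\rho-1))$ from the exponential is false in general: take $\gamma_1=\gamma_2=1$, $a=1$, $\rho=100$; then $\gamma_1 a/(\rho-1)\approx 10^{-2}$ while $\gamma_1\gamma_2 a^2=1$. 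The corollary still holds there, but only because you under-used the retained term: you bounded $1/(1-r)$ below by a constant (``$\Theta(1/\gamma_1)$ for fixed parameters''), whereas $1-r\le \frac{\gamma_1+\gamma_2}{\gamma_2(\rho-1)}\bigl(1+O(\frac{1}{\rho-1})\bigr)$, so the retained term is of order $\frac{\gamma_2(\rho-1)}{\gamma_1(\gamma_1+\gamma_2)}$ and grows with $\rho$. Combining this with the linear renewal bound gives a relative contribution of the dropped integrals of order $(1-r)\int_0^a h(t)\,dt = O\bigl(\gamma_1 a/(\rho-1)\bigr)$, which is precisely what the hypothesis controls --- and this, rather than the exponential prefactor alone, is why the condition takes the form $\gamma_1 a\ll\rho-1$. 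With that substitution your argument closes in both sub-regimes.
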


\section{Coverage probability}
\label{sec:coverage-probability}
We now consider that there are many UE on the line which want to communicate
with the user located at the origin. There is a special customer at position $x$
and we want to evaluate the probability that she can communicate with the
origin, via the RIS, considering all the other communications as interference.
For the sake of simplicity, we now consider the RIS as a point located at
position $a$: We no longer take into account the necessity that a length
$\delta$ of the RIS is visible by the UE. We borrow the following results from \cite{Basar_2021}. When the RIS is of the active sort (which requires power supply), the received
power is proportional to
\begin{equation}
	\label{eq_main:2}
	N \frac{P_{T}P_{A}}{P_{A}\sigma_{v}^{2}d_{\text{SR}}^{2}+ P_{T} \sigma^{2}d_{\text{RD}}^{2}},
\end{equation}
where $P_{A}$ is the maximum RIS-reflect power and  $\sigma_{v}^{2}$ is the RIS-induced noise power. As usual, we  incorporate in the previous formula  the Rayleigh fading, represented by $F_x$, an independent exponentially distributed random variable. Additionally, we have $d_{\text{SR}}^{2} = l^2 + a^2$ and $d_{\text{RD}}^{2} = l^2 + (x-a)^2$.
Consequently, we retrieve the formula for the received power from a given transmitter $x$ to $0$ as follows:
\begin{equation}
	\label{eq_power_received}
	P_r(O \leftarrow x) = \frac{c F_x}{K + (x-a)^2},
\end{equation}
where $c = \dfrac{N P_A}{\sigma^2}$ and $K = \dfrac{P_A \sigma_v^2}{P_T \sigma^2}(l^2 + a^2) + l^2$. \\
It should be noted that in ~\eqref{eq_power_received}, it is assumed that the position of transmitter $x$ allows him to communicate with $O$, 
i.e., $x$ is positioned between two obstacles and is well covered by the active RIS. \\
We denote by $\Phi$ the Poisson process with intensity $\lambda$ representing
the set of all transmitters $y$ that may be interfering with the communication
between $x$ and $O$. Recall that  $X(y)=0$ means that  $y$ is located between two
obstacles. We denote  by $\Phi_{X}$, the points $y$ of $\Phi$
for which $X(y)=0$. 
The process $X$ and and the point process $\Phi$ are assumed to be independent but the random variables
$X(y)$ are not independent so strictly speaking, we cannot say that  $\Phi_{X}$ is a Poisson
point process. 
Furthermore, according  to~\eqref{eq_power_received}, for $y\in \Phi_{X}$, we have:
\begin{equation*}
	P_r(O \leftarrow y) = \frac{c F_y}{K + (y-a)^2} \, \indic_{\{\tau_y \ge \frac{y-a}{\rho}\}},
\end{equation*}
where $\tau_y$ is the distance between $y$ and the last obstacle before $y$ 
and the condition $\tau_y \ge \frac{y-a}{\rho}$ amounts to saying that $y$ is
covered by the active RIS. This last condition creates another theoretical  difficulty s: Since the renewal process we use as a model is stationary, it is
clear that $\tau_y$ is exponentially distributed with parameter~$\gamma_1$ but
for different $y$ and $z$, the random variables $\tau_y$ and $\tau_z$ are not
independent. However,  for the sake of tractability, we consider that $\Phi_{X}$ is a
Poisson process  of intensity
$\lambda_{\Phi_{X}}=\lambda \gamma_{1}(\gamma_{1}+\gamma_{2})^{-1}$ and  that the random variables $(\tau_y)_{y \in \Phi}$ are independent (we
call these assumptions $H_{0}$). We show below by simulation that these assumptions turns out to be harmless.
With these notations at hand, we denote the SINR at $x$ by : 
\begin{equation*}
	\mathrm{SINR}_x = \frac{P_r(O \leftarrow x)}{\displaystyle \sum_{y \in \Phi_{X}} P_r(O \leftarrow y)},
\end{equation*}
and then we have the following theorem:
\begin{theorem}
	\label{th:sinr}
	For a threshold $\theta > 0$, under $H_{0}$, we have :
	\begin{equation*}
\P \left( \mathrm{SINR}_x \ge \theta \,|\, a \leftarrow x\right) = \exp \left( - \frac{\beta \lambda}{\sqrt{K + \beta}} \int_{0}^{+ \infty} \frac{1}{1 + y^2} e^{ - \frac{\gamma_1}{\rho} \sqrt{K + \beta} \, y} \, \mathrm{d}y \right),
\end{equation*}
	where $\beta = \theta \left( K + (x-a)^2 \right)$.
\end{theorem}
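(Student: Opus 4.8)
The plan is to run the standard Rayleigh-fading argument of stochastic geometry: condition on the whole interference field and use that $F_x$ is exponential to turn the coverage event into a Laplace transform. Writing $\beta = \theta\left(K+(x-a)^2\right)$ and $I = \sum_{y\in\Phi_X} P_r(O\leftarrow y)$ for the aggregate interference, clearing the denominator in~\eqref{eq_power_received} shows that $\{\mathrm{SINR}_x\ge\theta\}$ is exactly the event $\{F_x \ge (\beta/c)\,I\}$. Since $F_x$ is exponential (of unit rate, absorbing its parameter into $c$) and independent of $I$, integrating it out first gives $\P(\mathrm{SINR}_x\ge\theta \,|\, a\leftarrow x) = \esp{e^{-(\beta/c)\,I}}$. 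I would stress at this point that the constant $c$ appears in every summand of $I$ and therefore cancels against the $c$ in $\beta/c$, so that the transmit/gain constant disappears and only $\beta$ survives; by Slivnyak's theorem the conditioning on the served transmitter at $x$ does not alter the law of the interfering process.

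Next I would recognise $I$ as a shot-noise functional over the marked point process $\Phi_X$ with marks $(F_y,\tau_y)$. Under the working hypothesis $H_0$, $\Phi_X$ is Poisson of intensity $\lambda$, the fading marks $F_y$ are i.i.d. exponential, and the coverage marks $\tau_y$ are i.i.d. exponential with parameter $\gamma_1$, all independent of the locations. The key step is then the Laplace functional of a marked Poisson process: for a nonnegative response $G$,
\[
\esp{\exp\Bigl(-\sum_{y\in\Phi_X} G(y,F_y,\tau_y)\Bigr)} = \exp\Bigl(-\lambda\int \bigl(1-\mathbf E_{F,\tau}[e^{-G(y,F,\tau)}]\bigr)\,\mathrm dy\Bigr),
\]
applied with $G(y,F,\tau) = \beta\,\dfrac{F}{K+(y-a)^2}\,\indic_{\{\tau\ge (y-a)/\rho\}}$.

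The inner mark expectation then factorises. Conditioning on the coverage indicator, on the event $\{\tau\ge (y-a)/\rho\}$ the fading integrates to $\mathbf E_F[e^{-\beta F/(K+(y-a)^2)}] = \bigl(K+(y-a)^2\bigr)/\bigl(K+(y-a)^2+\beta\bigr)$, while on its complement $G=0$; averaging the indicator over the exponential $\tau$ yields the coverage probability $e^{-\gamma_1 (y-a)/\rho}$ for interferers at $y>a$. Combining,
\[
1-\mathbf E_{F,\tau}[e^{-G(y,F,\tau)}] = e^{-\gamma_1 (y-a)/\rho}\,\frac{\beta}{K+\beta+(y-a)^2}.
\]
It then remains to evaluate the spatial integral over the interferers lying beyond the surface: the change of variables $u=y-a$ followed by the rescaling $u=\sqrt{K+\beta}\,v$ extracts the prefactor $\beta/\sqrt{K+\beta}$ and produces exactly $\int_0^{+\infty}(1+v^2)^{-1}e^{-\frac{\gamma_1}{\rho}\sqrt{K+\beta}\,v}\,\mathrm dv$, which is the announced formula.

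The main obstacle is conceptual rather than computational: it lies in legitimately replacing the genuinely dependent thinned field (the true $\Phi_X$, with correlated $\tau_y$'s) by an independently-marked Poisson process, which is precisely the content of $H_0$ and is only validated a posteriori by simulation. Once $H_0$ is granted, the Laplace-functional computation is routine; the points that still require care are the cancellation of $c$, keeping the fading mark $F$ and the coverage mark $\tau$ independent inside the inner expectation, and checking that the coverage factor $e^{-\gamma_1 (y-a)/\rho}$ effectively restricts the spatial support to $y>a$ so that the substitution delivers the lower limit $0$ together with the clean prefactor $\beta/\sqrt{K+\beta}$.
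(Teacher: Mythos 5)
Your proposal is correct and follows essentially the same route as the paper: integrate out the Rayleigh fading $F_x$ to express the coverage probability as the Laplace transform of the interference, apply the probability generating functional of the independently marked Poisson process granted by $H_0$, compute the inner expectation over the marks $(F_y,\tau_y)$, and finish with the rescaling $u=\sqrt{K+\beta}\,v$. Your added remarks on the cancellation of $c$ and on the mark expectation factorising are just explicit versions of steps the paper leaves implicit.
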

\begin{proof}[Proof of Theorem~\protect\ref{th:sinr}]
	Using the previous notations, we have:
	\begin{multline*}
			\P \left( \mathrm{SINR}_x \ge \theta\,|\, a \leftarrow x \right) \\
			= \P \left( \frac{c F_x}{K + (x-a)^2} \ge  \displaystyle \sum_{y \in \Phi_{X}} \frac{c\theta F_y}{K + (y-a)^2} \, \indic_{\{\tau_y \ge \frac{y-a}{\rho}\}} \right) \\
			= \esp{ \exp \left( - \beta \displaystyle \sum_{y \in \Phi_{X}} \frac{F_y}{K + (y-a)^2} \, \indic_{\{\tau_y \ge \frac{y-a}{\rho}\}} \right) }
		  .
	\end{multline*}
Under $H_{0}$, we then recognise the probability generating functional of the  Poisson process
$\Phi_{X}$  with a pair of independent marks both following an  exponential law
of parameter~$1$.
	We get:
	\begin{equation*}
\P \left( \mathrm{SINR}_x \ge \theta\,|\, a\leftarrow x \right)
			= \exp \left( - \lambda \int_{0}^{+\infty} \frac{\beta}{K + \beta + y^2} e^{- \frac{\gamma_1}{\rho} y} \mathrm{d}y \right).
\end{equation*}
The final form  follows by a change of variable.
\end{proof}

\section{Numerical analysis}
\label{sec:numerical-analysis}


Using the approximation of Corollary ~\ref{cor:approximation}, we have:
	\begin{equation*}
		\esp{L} \simeq \frac{1}{\gamma_1}  \frac{\rho \left( \alpha + \frac{1}{\rho - 1}\right)}{1 + \alpha + \frac{1}{\rho - 1}}, 
	\end{equation*}
	where $\alpha = \dfrac{\gamma_2}{\gamma_1}$.
Note that $\alpha$ is the ration of the  mean length of holes to the mean length of obstacles: A large value of $\alpha$ means that obstructions are small compared to the quantity of empty space. The intuition then says that the RIS is likely to be very efficient in such a situation. This is what we recover here as $\esp{L}$ strongly increases for the smallest increments of~$\alpha$, see Figure~\ref{fig:asymptotic}. 
\begin{figure}[!ht]
	\centering
	\includegraphics*[width=0.4\textwidth]{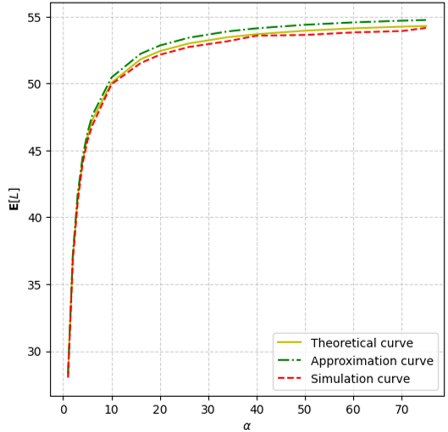}
	\caption{$\esp{L}$ as a function of $\gamma_2/\gamma_1$.}
	\label{fig:asymptotic}
\end{figure}

The main idea in Theorem~\protect\ref{th:sinr} is that we have supposed that
$\Phi_{X}$ is a Poisson process and that
the  $(\tau_y)_{y \in \Phi}$ are independent. In order to verify the
compatibility  of these assumptions with a more realistic model, we compute by
simulation the quantity $\P \left( \mathrm{SINR}_x \ge \theta\,|\, O\leftarrow x \right)$ for values of $\theta$ ranging from $0.1$ to $25$ without the hypothesis of independence. As we have two sources of randomness: obstacles and void spaces on the one hand, locations of the users on the other hand, we must say what varies and what is fixed.
We  fix the position of $x$ as well as the Poisson process of the other
transmitters (i.e. $\Phi$).  At each iteration, we generate the obstacles (i.e. $X$) and
keep  only the configurations for which  $x$ is positioned between two obstacles
 and is  covered by the active RIS. We compute the  average of $(\mathrm{SINR}_x \ge \theta)$ only on these configurations. For numerical application, we consider that $P_T = P_A = 20$ dBm and $\sigma_v^2 = \sigma^2 = -90$ dBm,
and without loss of generality, we take $a = 0$, $\rho = 20$, $\gamma_1 = 0.5$ and $\lambda = 0.2$.
The results  given in Figure ~\ref{fig:sinr_comparaison} show that the coverage
probability without interference is very close to the coverage probability
taking into account dependency between the $\tau_y$'s. Moreover, the curve
obtained under the  hypothesis  $H_0$ is lower that the
true curve, which means that choosing the parameters in order to guarantee a
coverage probability greater than a given threshold under $H_0$ ensures that the
real coverage probability will be higher. Such a beneficial effect of
correlations has already been observed in \cite{lee2018effect}.

\begin{figure}[!ht]
	\centering
	\includegraphics*[width=0.5\textwidth]{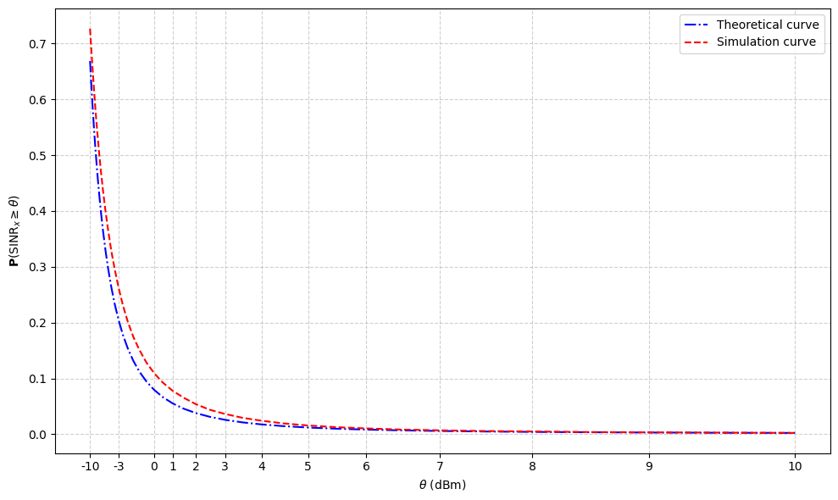}
	\caption{Comparison results of the coverage probability}
	\label{fig:sinr_comparaison}
\end{figure}


\section*{Funding}
\label{sec:funding}
The second and third authors were supported in part by the French National Agency for
Research (ANR) via the project n°ANR-22-PEFT-0010 of the France 2030 program PEPR
réseaux du Futur.

\end{document}